\documentclass[letterpaper, 10pt]{article}
\usepackage[a4paper, margin = 1in]{geometry}
\usepackage{xargs}                      
\usepackage[pdftex,dvipsnames]{xcolor}  
\usepackage[utf8]{inputenc}
\usepackage[english]{babel}
\usepackage{amsmath,amssymb,hyperref,xcolor}

\usepackage{amsthm}
\usepackage{enumerate}
\usepackage{subcaption}
\usepackage{refcount}
\usepackage[noadjust]{cite}

\usepackage[colorinlistoftodos,prependcaption,textsize=normalsize]{todonotes}
\newcommandx{\unsure}[2][1=]{\todo[linecolor=red,backgroundcolor=red!25,bordercolor=red,#1]{#2}}
\newcommandx{\change}[2][1=]{\todo[linecolor=blue,backgroundcolor=blue!25,bordercolor=blue,#1]{#2}}
\newcommandx{\info}[2][1=]{\todo[linecolor=OliveGreen,backgroundcolor=OliveGreen!25,bordercolor=OliveGreen,#1]{#2}}
\newcommandx{\improvement}[2][1=]{\todo[linecolor=Plum,backgroundcolor=Plum!25,bordercolor=Plum,#1]{#2}}
\newcommandx{\thiswillnotshow}[2][1=]{\todo[disable,#1]{#2}}
\newcommand{\rom}[1]{\uppercase\expandafter{\romannumeral #1\relax}}

\newcommand{\removed}[1]{{}}


\usepackage{stfloats}

\usepackage{tikz}
\usepackage{pgfplots}
\usepackage{pgfgantt}
\usepackage{pdflscape}
\pgfplotsset{compat=newest}
\pgfplotsset{plot coordinates/math parser=false}

\title{\LARGE \bf Projection-based Controllers with Inherent Dissipativity Properties}

\author{Hoang Chu, S.J.A.M van den Eijnden, W.P.M.H. Heemels   %
\thanks{The authors are with the Control Systems Technology Section, Dept. Mechanical Engineering, Eindhoven University of Technology, The Netherlands. Corresponding author: Hoang Chu
({\tt\small h.chu@tue.nl})
\newline \indent This research received funding form the European Research Council (ERC) under the Advanced ERC grant agreement PROACTHIS, no. 101055384.}%
}

\newtheorem{theorem}{Theorem}

\newtheorem{remark}{Remark}
\newtheorem{assumption}{Assumption}

\theoremstyle{definition}
\newtheorem{definition}{Definition}[section]

\usepackage{accents}

\let\leq\leqslant
\let\geq\geqslant

\let\tilde\widetilde

\let\cal\mathcal


\newcommand{\norm}[1]{\left\lVert#1\right\rVert}    
\newcommand{\remove}[1]{ }
\newcommand{\ree}{\mathbb{R}}

\newcommand{\cS}{{\cal S}}
\newcommand{\cE}{{\cal E}}

\newcommand{\cK}{{\cal K}}

\DeclareMathOperator*{\argmin}{\arg\!\min}

\begin{document}
\maketitle
\thispagestyle{empty}
\pagestyle{empty}

\begin{abstract}
Projection-based Controllers (PBCs) are currently gaining traction in both scientific and engineering communities. In PBCs, the input-output signals of the controller are kept in sector-bounded sets  by means of projection. In this paper, we will show how this projection operation can be used to induce useful passivity or  general dissipativity properties on broad classes of (unprojected) nonlinear controllers that otherwise would not have these properties. The induced dissipativity properties of PBC will be exploited to guarantee asymptotic stability of  negative feedback interconnections of passive nonlinear plants and suitably designed PBC, under mild conditions. Proper generalizations to so-called $(q,s,r)$-dissipativity will be presented as well.  For illustrating the effectiveness of PBC control design via these passivity-based techniques, two numerical examples are provided. 
\end{abstract}

\tikzstyle{block} = [draw, fill=white!20, rectangle, 
    minimum height=2em, minimum width=6em]
\tikzstyle{sum} = [draw, fill=white!20, circle, node distance=1cm]
\tikzstyle{input} = [coordinate]
\tikzstyle{output} = [coordinate]
\tikzstyle{pinstyle} = [pin edge={to-,thin,black}]

\section{Introduction}
Projection-based controllers (PBCs) are gaining attention in several areas of science and engineering, including the control of wafer scanners \cite{Heertjes_2020} and  micro electro-mechanical systems \cite{Shi_2022}. The main philosophy underlying PBC is to keep the input-output pair of the controller confined to prescribed sector-bounded sets. A notable example of PBC is given by the so-called hybrid integrator-gain system (HIGS) \cite{DeeSha_AUT21a}, in which projection is used to force the sign of the integrator's output similar to that of its input at all times, thereby facilitating the possibility to overcome fundamental performance limitations of classical linear time-invariant (LTI) controllers \cite{Eijnden_2020}. The sign equivalence, or more generally the satisfaction of sector-bounds by the inputs and outputs of  controllers is recognized as an important property for performance enhancement, and, is,  for instance also exploited in reset controllers where, in contrast to PBC, this is enforced by resets of the controller states, see, e.g., \cite{NESIC2008,ZaccarianBook}. 

A rigorous foundation for the formalization and analysis of generic closed-loop PBC systems has been provided recently in \cite{DeeSha_AUT21a, Sharif_2019, heemelsaneel_lcss_2023} through the framework of extended projected dynamical systems (ePDS). This new class of discontinuous dynamical systems resembles the classical projected dynamical systems (PDS) (see, e.g., \cite{nagurney1995projected,Dupuis1993}) and also allows to correct the vector field at the boundary of the constraint set. However, in ePDS, next to the consideration of irregular constraint sets, a key difference to classical PDS is the use of  {\em partial} projection operators in which not all projection directions are allowed. In this way,  projections can be captured  that  only correct the controller dynamics and states and do not change the plant dynamics. Existence and completeness results for solutions to ePDS with sector constraints and closed-loop PBC systems (with and without inputs) have recently been established in \cite{heemelsaneel_lcss_2023}.

In the present paper we study (asymptotic) stability properties of generic closed-loop PBC systems. We will start by demonstrating that through the  projection of the input-output pair of a rather general class of (unprojected) single-input single-output controllers (strict) on a well-crafted sector-like set, PBCs can be created with desirable {passivity} or other dissipativity properties in a natural way. Passivity and dissipativity are fundamental properties that have been extensively studied for linear and nonlinear systems and are frequently exploited to facilitate closed-loop system analysis and design for nonlinear systems \cite{khalil2013nonlinear, TORA_passivity,LeTeel_SoftReset}. The celebrated passivity theorem (see, e.g., \cite[Chapter 6]{khalil2013nonlinear}) states that the negative feedback interconnection of two passive systems is passive, and, under  additional detectability assumptions, is asymptotically stable. In view of the aforementioned properties, the notion of passivity and its dissipativity generalizations such as $(q,s,r)$-dissipativity \cite{ANTSAKLIS2013379}, thus provide a natural framework for the analysis and design of PBC systems.  

In line with the above, the main contribution of this paper is to connect the two frameworks of PBC and dissipativity for developing an effective way to design stabilizing controllers for nonlinear plants that satisfy $(q,s,r)$-dissipativity properties. In particular, we will match the sectors of the PBC with the underlying dissipativity properties of the plant in a manner such that the feedback interconnection of the plant and PBC is guaranteed to produce outputs that asymptotically tend to zero. Under  mild additional assumptions on the underlying \emph{unprojected} controller dynamics, stronger asymptotic stability properties can be guaranteed. We care to point out that in our approach considered in this paper, the {unprojected} controller dynamics play no critical role in guaranteeing the closed-loop system to produce outputs that asymptotically tend to zero. This forms an interesting feature of the projection mechanism as it can turn a large class of controllers into PBCs with relevant dissipativity properties in a natural way. Moreover, this suggests that a choice for the unprojected controller dynamics (such as a linear integrator in HIGS \cite{DeeSha_AUT21a}) can be motivated largely from a performance perspective, and thus provides an additional tuning knob for performance that is not evidently present in unprojected controllers. 

The remainder of the paper is structured as follows. Section~\ref{sec:sys config and prob formulation} introduces the problem formulation of the controller design as well as the characteristics of the plant, the projection-based controller, and the closed-loop system. Section~\ref{sec: design} presents the design of PBC controllers that guarantee closed-loop stability. In Section~\ref{sec: analysis} we present our main results in the form of a stability analysis of the closed-loop systems with projection-based controllers. Section~\ref{sec: examples} provides two illustrative examples, and the  conclusions are given in Section~\ref{sec: conclusion}.

\section{System description and problem formulation} \label{sec:sys config and prob formulation}

In this section we will provide the plant that is considered in this paper, along with some of its key properties. We will also motivate the controller configuration and PBC setup, as well as the main problem formulation. 



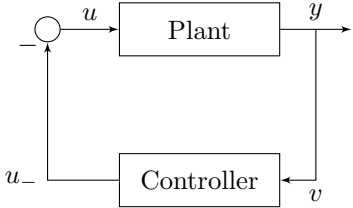
\begin{figure}[tbp]
    \centering
    \begin{tikzpicture}[auto, node distance=2cm,>=latex']
    \node [input, name=input] {};
    \node [sum, right of=input] (sum) {};
    \node [block, right of=sum] (plant) {Plant};
    \node [output, right of=plant] (output) {};
    \node [block, below of=plant] (controller) {Controller};

    \draw [->] (sum) -- node {$u$} (plant);
    \draw [->] (plant) -- node [name=y] {$y$}(output);
    \draw [->] (y) |- node {$v$} (controller);
    \draw [->] (controller) -| node[pos=0.99] {$-$} node {$u_-$} (sum);
\end{tikzpicture}
    \caption{Negative feedback interconnection of a plant and a controller.}
    \label{fig:interconnection}
\end{figure}


\subsection{System configuration and plant model}
In this paper, we will consider the standard negative feedback interconnection of plant and controller as shown in Fig. \ref{fig:interconnection}. The  plant is assumed to be a single-input-single-output (SISO) nonlinear system of the form
\begin{subequations} \label{eq:plant5}
\begin{eqnarray}
\dot{x} & = & f_p(x, u) \\ \label{eq:plant output}
y &= & G_px
\end{eqnarray} 
\end{subequations}
with  plant state $x\in \ree^n$, control input $u\in \ree$ and output $y\in \ree$. Here, $f_p:\ree^{n}\times \ree \rightarrow \ree^n$ is a continuous function and $G_p\in \ree^{n}$ an output matrix describing a linear mapping of plant states to plant output. In this paper we specifically consider plants of the form \eqref{eq:plant5} that are \emph{detectable}, and possess certain \emph{dissipativity} properties. To make the upcoming discussions precise, we adopt the following definitions from \cite{Isidori1999NonlinearCS,LeTeel_SoftReset,ANTSAKLIS2013379}.

\begin{definition}(\cite[Definition 10.7.3]{Isidori1999NonlinearCS})
The plant \eqref{eq:plant5} is said to be detectable if for any initial condition $x_0$, the solution $x(t)$ to $\dot x = f_p(x,0)$ with $x(0) = x_0$ is defined for all $t \geq 0$, and
$$y = 0,\, \forall t \geq 0 \implies \lim_{t \to \infty} x(t) = 0$$
\end{definition}

\begin{definition} \label{def: qsr}
(\cite[Definition 2]{ANTSAKLIS2013379})
Consider $q,s,r \in \ree$.
    The system \eqref{eq:plant5} is $(q,s,r)$-dissipative, if there exist a continuously differentiable storage function $V_p: \ree^n \rightarrow \ree_{\geq 0}$ and $\mathcal{K}_\infty$-functions $\alpha_1$ and $\alpha_2$, such that for all $x\in \ree^n$
    \begin{equation}
        \alpha_1(\|x\|)\leq V_p(x) \leq \alpha_2(\|x\|)
    \end{equation}
    and 
    for all $(x,u) \in \ree^n \times \ree$ it holds that
        \begin{align}\label{eq:dV}
    \frac{\partial V_p}{\partial x} f_p(x,u)\leq q u^2 + 2suy + r y^2 = \begin{pmatrix}
             u \\ y
         \end{pmatrix}^\top \underbrace{\begin{pmatrix}
             q &s\\s &r
         \end{pmatrix}}_{:= M}\begin{pmatrix}
             u \\ y
         \end{pmatrix}.
    \end{align}
\end{definition}



The notion of $(q,s,r)$-dissipativity generalizes several important properties including passivity by taking $q = r =0$ and $s >0$. Note that we can scale the storage function $V_p$ in this case to get $s=\frac{1}{2}$ as is more common for passivity giving $\frac{\partial V_p}{\partial x} f_p(x,u)\leq uy$. Also, output-strictly passivity  \cite{khalil2013nonlinear}, which complies with $q  =0$, $r<0$, and $s = \frac{1}{2}$ (or $s>0$ and apply scaling to the storage function again) and ${\cal L}_2$-type properties by having $s=0$, $q>0$ and $r<0$ (implying an ${\cal L}_2$-gain  smaller than $\sqrt{\frac{q}{-r}}$ for the plant \eqref{eq:plant5}).  

Summarizing, we state the following standing assumption on the characteristics of the plant \eqref{eq:plant5} that will be used throughout the rest of the paper.

\begin{assumption}\label{assumption: detectable}
    The plant \eqref{eq:plant5} is detectable and $(q,s,r)$-dissipative.
\end{assumption}

The main objective in this paper is to design a controller, in particular a projection-based controller, that asymptotically stabilizes the plant \eqref{eq:plant5} by a systematic and natural design framework exploiting the dissipativity properties of the plant dynamics. 



\subsection{Projection-based control}

In order to introduce PBC, we start with the unprojected (SISO) controller dynamics that are  described by 
\begin{subequations}\label{eq:controller5}
\begin{eqnarray}
\dot{z} &= & \frac{d}{dt}\begin{pmatrix}
    z_1\\z_2
\end{pmatrix} = \begin{pmatrix}
    f_1(z_1,z_2,y)\\ f_2(z_1,z_2,v)
\end{pmatrix} = f_c(z,v)\\
u_- &= & z_1
\end{eqnarray} 
\end{subequations}
with the controller state $z=(z_1,z_2)\in \ree \times \ree^{m-1}$, controller output $u_-\in \ree$ and controller input $v\in \ree$. The map $f_c: \ree^{m}\times \ree \rightarrow \ree^m$ is assumed to be continuous. Note that the output equation  in \eqref{eq:controller5}, and also in \eqref{eq:plant5},  is chosen in a  (particular) linear form,  which, for many dynamical systems can be realized by an appropriate state transformation, e.g.,   into the normal form \cite{khalil2013nonlinear}. The reason for this selected structure is the ease of exposition and analysis in the next sections. 

Using the descriptions for the plant  \eqref{eq:plant5} and the controller \eqref{eq:controller5}, in the negative feedback interconnection of Fig.~\ref{fig:interconnection}, where $v=y$ and $u=-u_-$, we obtain the (unprojected) closed-loop dynamics as 
\begin{equation} \label{eq:unprojected}
\dot \xi =    f(\xi) = (f_p(x, -z_1),f_c(z,G_p x)), 
\end{equation} where the closed-loop states are $\xi=(x,z) \in \ree^{n+m}$.

To introduce the PBC based on \eqref{eq:controller5},  we will add a \emph{projection operation} to \eqref{eq:unprojected}, with the objective to keep $(v,u_{-})$ in a well-designed sector $
S_{[k_1,k_2]}$ described by
\begin{equation} \label{eq:sector5}
 S_{[k_1,k_2]} :=   \left\{(v,u_{-})\in \mathbb{R}^2 \mid  (u_{-}-k_1v)(u_{-}-k_2v) \leq 0 \right\}, 
\end{equation}
where $k_1, k_2 \in \ree$ with $k_1<k_2$, as is motivated by HIGS and other PBC systems \cite{DeeSha_AUT21a}. If $k_1$, $k_2$ are clear from the context, we write $S_{[k_1,k_2]}=S$.
Following \cite{heemelsaneel_lcss_2023}, we directly describe the dynamics of the closed-loop system, by defining 
\begin{equation}
 {\cal S}=   \left\{\xi\in \mathbb{R}^{n+m} \mid {(G_px, z_1)}\in S \right\} 
\end{equation}
and the projection subspace 
\begin{equation}
    \cE := \textup{Im}{\begin{bmatrix} O_{n\times m} \\ I_m \end{bmatrix}},
\end{equation} 
chosen in such that a way only changes in the controller dynamics are induced by the projection, as we will explain below.

We introduce now the partial projection operator $\Pi_{{\cal S},\cE}$  in \eqref{eq:unprojected}, which leads to the following closed-loop description in the form of an extended projected dynamical system  (ePDS)  \cite{heemelsaneel_lcss_2023, DeeSha_AUT21a}
\begin{equation}\label{eq:cloop}
    \dot \xi = F(\xi)= \Pi_{{\cal S},\cE}(\xi,f(\xi))
\end{equation}
where the  projection operator is given by 
\begin{equation}\Pi_{\cS,\cE}(\xi,p) := \argmin_{w \in T_{\cS}(\xi), w - p \in \cE} \|w-p\|. \label{piSE}
\end{equation}
Here, $T_{\cS}(\xi)$ is the tangent cone to the set $\cS \subset \mathbb{R}^n$ at a point $\xi \in \cS$,  defined as the collection of all vectors $p\in \mathbb{R}^n$ for which there exist sequences $\{x_i\}_{i\in \mathbb{N}} \in S$ and $\{\tau_i\}_{i\in \mathbb{N}}$, $\tau_i > 0$
with $x_i \rightarrow x$, $\tau_i \downarrow 0$ and $i \rightarrow \infty$, such that $
    p = \lim_{i\rightarrow \infty} \frac{x_i - x}{\tau_i}.$  The tangent cone $T_{\cS}(\xi)$ at $\xi$ captures, loosely speaking, the admissible velocites at the point $\xi \in \cS$ that keep the flow or the solutions inside $\cS$, which is exactly the objective. With this interpretation in mind,  the key idea behind $\Pi_{S,\cE}(\xi,f(\xi))$ is that it basically ``projects'' the unprojected vector field $f(\xi)$ on $T_{\cS}(\xi)$, i.e., it finds the  velocity in $T_\cS(\xi)$ that is closest to $f(\xi)$) to make sure that the solutions \eqref{eq:cloop} do not leave the set $\cS$, much as in the classical projected dynamical systems (PDS) \cite{nagurney1995projected}. However, note that, in contrast to classical PDS, here irregular constraint sets are used, namely sector sector-like sets, and that the projection is ``partial'' in the sense that the projection direction $w - f(\xi)$ has to be contained in $\cE$. Given the definition of  $\cE$ in our case, this indicates that only the controller dynamics can be modified by the projection, and the plant dynamics, which adhere to physical laws, cannot be changed. Clearly, this is in line with the controller setting that we consider in this paper, as the plant dynamics are fixed. Note that also that when $\xi$ is in the interior of $\cS$ the tangent cone $T_{\cS}(\xi)=\ree^{m+n}$ and, hence, $\dot \xi =f(\xi)$, i.e., in the interior of the constraint set the original dynamics are active.

\subsection{Problem formulation} \label{subsec:pf}

The problem to be addressed in this paper is to develop a systematic and easy-to-apply design framework for PBC systems for the  asymptotic stabilization of plants having $(q,s,r)$-dissipativity properties in the feedback interconnection of Fig. \ref{fig:interconnection}. Hence, in particular, the design question is how to choose the controller dynamics 
\eqref{eq:controller5} together with the  sector $S_{[k_1,k_2]}$, i.e., the selection of $k_1<k_2$, such that the interconnection \eqref{eq:cloop} is globally asymptotically stable, in the sense of the following definition. 

\begin{definition} \label{def: asymp stab} (\cite[Definition 4.1]{khalil2013nonlinear})
    We say that the system \eqref{eq:cloop} is globally asymptotically stable, if 
    \begin{enumerate}[(a)]
    \item for any  $\xi_0 \in \cS$, there is a solution \footnote{Solutions are in the sense of Carath\'eodory as formalized in Definition~\ref{def.cara} below.} $\xi:\ree_{\geq 0} \rightarrow \ree^{m+n}$ to \eqref{eq:cloop} with $\xi(0)=\xi_0$ that is defined for all times $t\in \ree_{\geq 0}$, and all solutions can be prolonged to be defined for all times $t\in \ree_{\geq 0}$,
    \item for all $\varepsilon>0$ there is a $\delta >0$ such that for all $\xi_0 \in \cS$ with $\|\xi_0\| \leq \delta$ all corresponding solutions with $\xi(0)=\xi_0$ satisfy $\|\xi(t)\| \leq \delta$  for all $t\in \ree_{\geq 0}$,
    \item for all solutions $\xi:\ree_{\geq 0} \rightarrow \ree^{m+n}$ to \eqref{eq:cloop} it holds that 
    $\lim_{t \to \infty} \xi(t) = 0.$
    \end{enumerate}
\end{definition}

As we will see in the next sections, the (unprojected) controller dynamics \eqref{eq:controller5} play only a minor role in the stabilization;  the design of the sector $S_{[k1,k2]}$ is the main step in the design and under mild assumptions on \eqref{eq:controller5}  the closed-loop system becomes globally asymptotically stable. Interestingly, this hints upon the fact that existing smooth (possibly non-stabilizing) controllers can be easily turned into stabilizing controllers by just adding suitable projection operators (aligned with the $(q,s,r)$-dissipativity properties of the plant) to the controller.

\section{Design of projection-based controllers}\label{sec: design}

In this section we present the systematic design of the PBC, which given a plant that is detectable and $(q,s,r)$-dissipative as in Assumption \ref{assumption: detectable}, such that  the resulting closed-loop system 
\eqref{eq:cloop} is asymptotically stable. The design procedure involves two important steps: i) design of the sector in \eqref{eq:sector5} from projection, and ii) design of the underlying (unprojected) controller dynamics. First, the sector in \eqref{eq:sector5} is designed to guarantee that the storage function of the plant is strictly decreasing if $(y,-u) \in S$ and $(y,-u) \neq (0,0)$, i.e., we need to ensure that the input-output pair of the controller (respectively the input-output pair of the plant) satisfies $qu^2 + 2suy + ry^2 < 0$ for all $(y,-u) \in S$ and $(y,-u) \neq (0,0)$. For this purpose, we can make use of the sector inequality in \eqref{eq:sector5} in an S-procedure relaxation manner \cite{Petersen2000} to construct a matrix inequality that can be solved in order to find $k_1$ and $k_2$. Specifically, if there exists a solution $\lambda \geq 0$ to the following matrix inequality
\begin{equation} \label{eq: matrix ineq}
    \begin{bmatrix}
        q & s\\ s & r
    \end{bmatrix} - \lambda \begin{bmatrix}
        1 & \frac{1}{2} (k_1+k_2) \\
        \frac{1}{2}(k_1+k_2) & k_1k_2
    \end{bmatrix} \prec 0,
\end{equation}
then we can guarantee that $qu^2 + 2suy + ry^2 < 0$ for all $(u,y) \in \{ (u,y)\in \ree^2 \mid (u+k_1 y)(u+k_2 y) \leq 0 \}$.
In case the first matrix in \eqref{eq: matrix ineq} (corresponding to the matrix $M$ in \eqref{eq:dV}) is positive definite, \eqref{eq: matrix ineq} will not admit a feasible solution for any $\lambda \geq 0$, $k_1,k_2 \in \mathbb{R}$. Moreover, if this matrix is negative definite, there will always exist a feasible solution to \eqref{eq: matrix ineq} - take $\lambda =0$ for example. As such, the interesting case to consider is for $q\geq 0$ ($q \leq 0$) and $qr-s^2 \leq 0$ ($qr-s^2 \geq 0$). It is easy to see that in this case $\lambda >0$ is a necessary condition for feasibility of \eqref{eq: matrix ineq}. As such, we can scale the above matrix inequality by a factor of $\bar{\lambda}:= \frac{1}{\lambda}$. Moreover, using a change of variables $c:=k_1+k_2$ and $d:= k_1 k_2$ results in a linear matrix inequality (LMI) that can be easily solved using numerical tools. We present the design of the sector that ensure that \eqref{eq: matrix ineq} is feasible in three interesting cases.

\begin{enumerate}
\item Passive plant ($q = 0$, $r=0$ and $s>0$): 
\begin{equation}
   0 < k_1 < k_2 < \infty.
    \end{equation}
    \item Output-strictly passive plant ($q = 0$, $r<0$, and $s > 0$): 
\begin{equation}
    \frac{r}{2s}< k_1 < k_2 < \infty.
\end{equation} 
    In this case, $k_1$ may be smaller than 0 as $r<0$ and $s>0$.
    \item General dissipative plant ($q>0$ and $qr-s^2 <0$): 
\begin{equation}
    \frac{s - \sqrt{s^2-qr}}{q} < k_1 <    k_2 < \frac{s + \sqrt{s^2-qr}}{q}.
\end{equation}
This case includes other cases of passivity such as input-strict passivity (\cite{khalil2013nonlinear}) and other cases of dissipativity such as the bounded $\mathcal{L}_2$-gain case ($q>0,s=0,r<0$).
\end{enumerate}

In general, a well-designed sector satisfies the following assumption.
\begin{assumption}\label{assumption: sector}
Two scalars $k_1$ and $k_2$ that define the sector $S$ as in \eqref{eq:sector5} satisfy \eqref{eq: matrix ineq} for some $\lambda \geq 0$.
\end{assumption}
If Assumption \ref{assumption: sector} is satisfied, and thus the inequality in \eqref{eq: matrix ineq} is solved successfully, due to the feedback interconnection we can guarantee
\begin{equation}\label{eq: Vp dot}
    \dot{V}_p(x(t)) \leq 0, \quad \textrm{and} \quad \dot{V}_p(x(t)) = 0 \Leftrightarrow (u,y) = (0,0).
\end{equation}

The second aspect of the design of a PBC controller is the design of the underlying (unprojected) controller dynamics. We need only one condition on the controller dynamics to ensure that the closed-loop dynamics are asymptotically stable, namely that the $z_2$-dynamics of the unprojected controller in \eqref{eq:controller5} is input-to-state stable (ISS) with respect to $(z_1,v)$. 
\begin{definition}
\label{assumption: iss}(\cite[Definition 4.7]{khalil2013nonlinear})
    The system $\dot{z}_2 = f_2(z_1,z_2,v)$ is input-to-state stable with respect to $(z_1,v)$, if there exist a  $\mathcal{K}\mathcal{L}$-function $\bar{\beta}$ and $\mathcal{K}$-functions $\bar{\gamma}_1$ and $\bar{\gamma}_2$ such that for all essentially bounded measurable functions $z_1$ and $v$ and all  $t\in \ree_{\geq 0}$ it holds that 
\begin{equation} \label{eq: ISS z_2}
    \| z_2(t) \| \leq \bar{\beta}(\| z_2(0)\|,t) + \Bar{\gamma}_1(\|z_1\|_\infty) + \Bar{\gamma}_2(\|v\|_\infty)
\end{equation}
with $\|v\|_\infty := \text{ess} \sup_{\tau \in \ree_{\geq 0}} \| v(\tau) \|$, and  $\|z_1\|_\infty := \text{ess} \sup_{\tau \in \ree_{\geq 0}} \| z_1(\tau) \|$.     
\end{definition}

In the remainder we assume the $z_2$-dynamics to be ISS in the sense of Definition~\ref{assumption: iss}. The main reason for requiring this condition becomes clear later in Section~\ref{sec: analysis}.

\begin{remark}
Note that this ISS condition is different from the minimum-phase property. Loosely speaking, a system is called minimum phase if the zero dynamics is asymptotically stable. Hence, in the case of \eqref{eq:controller5}, the controller is minimum phase, if $\dot z_2 = f_2(z_1,z_2,y)$ is asymptotically stable provided that $u(t)=z_1(t) = 0$ for all times $t$ and thus $\dot z_1(t) = f_1(0,z_2(t),y(t)) = 0$. However, for the PBC, $\dot z_1(t) = 0$ may hold even when $f_1(0,z_2(t),y(t)) \neq 0$, because the projection operator alters the controller dynamics. Hence, a stronger condition in terms of ISS of the $z_2$-dynamics is needed than non-minimum-phase-ness of the controller.
\end{remark}

\section{Main results}\label{sec: analysis}
 
Equipped with the above results and definitions, we are now ready to state the main results of the paper. We will start with a discussion on the existence of solutions followed by proving the asymptotic stability of the closed-loop system under the stated assumptions and proposed design.

\subsection{Existence of solutions} \label{sec:existence}
Before studying any (stability or passivity) properties of \eqref{eq:cloop}, let us first focus on the existence of Carath\'eodory-type solutions to the closed-loop dynamics given an initial condition. 

\begin{definition} \label{def.cara}
A function $\xi:[0,T]\rightarrow \ree^{n}$ is said to be a (Carath\'eodory) solution to $\dot \xi = \Pi_{\cS, \cE}(\xi,f(\xi))$, if $\xi$ is absolutely continuous on $[0,T]$ and satisfies $\dot \xi(t) = \Pi_{\cS,\cE}(\xi(t),f(\xi(t)))$ for almost all $t\in[0,T]$ and $\xi(t)\in \cS$ for all $t\in[0,T]$. 
Furthermore, $\xi:[0,\infty)\rightarrow \ree^{n}$ is called a solution, if the restriction of $x$ to $[0,T]$ is a solution on $[0,T]$ for each $T>0$.
\end{definition}
In \cite{heemelsaneel_lcss_2023}, it has been demonstrated that for any initial condition $\xi_0\in \ree^n$ there is a $T>0$ such that a Carathéodory solution $\xi:[0,T]\rightarrow \ree^{n}$ exist for the considered dynamics \eqref{eq:cloop} with $\xi(0)=\xi_0$. 

We recall a few ingredients from the proof of existence of solutions given in \cite{heemelsaneel_lcss_2023} as they turn out to be important for the closed-loop stability analysis below.  The proof in \cite{heemelsaneel_lcss_2023} starts by showing the existence of so-called Krasovskii solutions, i.e., (locally) absolutely functions satisfying, almost everywhere, the Krasovskii regularization of \eqref{eq:cloop}, which is given by the differential inclusion
\begin{equation} \label{eq:kras}
    \dot \xi \in \cap_{\delta >0} \overline{\textup{con}}F(B(\xi,\delta)) =: K_F(\xi).
\end{equation}
Here,  $\overline{\textup{con}}(M)$ denotes the closed convex hull of the set $M$, in other words, the smallest closed convex set containing $M$. Then, for the case of  sectors $\cS = \cK \cup -\cK$ as considered here, it is shown than any Krasovskii solution  
is, in fact, a Carath\'eodory solution to \eqref{eq:cloop}, thereby showing local existence of Carath\'eodory solution given an initial condition.



\subsection{Stability analysis} \label{sec: stability analysis}

In this section, we provide the main result of the paper, which pertains to asymptotic stability of the closed-loop system \eqref{eq:cloop}. The asymptotic stability to be proven is in the sense of Definition \ref{def: asymp stab}, i.e., the solution to the closed-loop dynamics \eqref{eq:cloop} is bounded and converges to zero. 

We summarize the main result of the paper in the following theorem.
\begin{theorem} 
Consider the plant \eqref{eq:plant5} and suppose Assumption \ref{assumption: detectable} is satisfied. Furthermore, suppose the PBC controller \eqref{eq:controller5} to have a sector condition satisfying Assumption \ref{assumption: sector} and the $z_2$-dynamics to be ISS with respect to $(z_1,v)$. 

Then the closed-loop PBC system \eqref{eq:cloop} is globally asymptotically stable.
\end{theorem}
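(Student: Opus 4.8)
The plan is to combine a Lyapunov/LaSalle-type argument on the plant storage function with the detectability of the plant and the ISS of the $z_2$-dynamics, while handling the subtlety that the projection operator may alter the controller dynamics. The first step is to establish part (a) of Definition~\ref{def: asymp stab}: local existence of Carath\'eodory solutions is already available from \cite{heemelsaneel_lcss_2023} as recalled in Section~\ref{sec:existence}, so it remains to show solutions can be prolonged to all of $\ree_{\geq 0}$, i.e., that they do not blow up in finite time. For this I would first argue boundedness of the plant state $x$: since Assumption~\ref{assumption: sector} holds, whenever $\xi(t)\in\cS$ the pair $(y(t),-u(t))=(G_px(t),z_1(t))$ lies in $S$, and by the S-procedure argument leading to \eqref{eq: Vp dot} we get $\dot V_p(x(t))\leq 0$ along solutions; hence $V_p(x(t))\leq V_p(x(0))$, and the sandwich bound $\alpha_1(\|x\|)\leq V_p(x)\leq\alpha_2(\|x\|)$ gives $\|x(t)\|\leq \alpha_1^{-1}(\alpha_2(\|x(0)\|))$ uniformly in $t$. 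In particular $z_1(t)=u_-(t)$ and $v(t)=y(t)=G_px(t)$ are bounded. Boundedness of $z_2$ then follows from the ISS estimate \eqref{eq: ISS z_2}, noting that the projection does not modify the $z_2$-component of the vector field (since projections along $\cE$ with the chosen structure only act on the $z_1$-direction within the controller block, so $\dot z_2=f_2(z_1,z_2,v)$ holds almost everywhere even for the projected dynamics — this point needs to be verified carefully from the definition of $\Pi_{\cS,\cE}$ and $\cE$). With $\xi$ bounded and the right-hand side of \eqref{eq:kras} locally bounded, no finite-time escape occurs and every solution extends to $\ree_{\geq 0}$; boundedness of solutions with small initial condition also yields Lyapunov stability, part (b), via the class-$\mathcal{K}_\infty$ bounds on $V_p$ together with the ISS bound (a composite Lyapunov/ISS small-signal estimate).

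\textbf{Convergence.} For part (c) I would proceed in two stages. \emph{Stage 1: $y(t)\to 0$ and $u(t)\to 0$.} Since $t\mapsto V_p(x(t))$ is nonincreasing and bounded below by $0$, it converges to some limit $V_\infty\geq 0$; consequently $\int_0^\infty (-\dot V_p(x(t)))\,dt <\infty$. Using $\dot V_p\leq qu^2+2suy+ry^2 <0$ strictly on $S\setminus\{0\}$ from \eqref{eq: Vp dot}, together with a continuity/compactness argument on the bounded trajectory, I would extract that $qu(t)^2+2su(t)y(t)+ry(t)^2\to 0$, and then — since on the sector $S$ this quadratic form is negative definite in $(u,y)$ away from the origin (this is exactly what the strict inequality in \eqref{eq: matrix ineq}/\eqref{eq: Vp dot} encodes) — conclude $(u(t),y(t))\to(0,0)$. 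A clean way is an invariance-principle style argument: the trajectory is bounded (precompact), $V_p$ is a continuous nonincreasing function along it, so the $\omega$-limit set is contained in $\{\dot V_p=0\}=\{(u,y)=(0,0)\}$, hence $y(t)\to 0$ and $z_1(t)=-u(t)\to 0$. \emph{Stage 2: $x(t)\to 0$ and $z_2(t)\to 0$.} From $y\equiv 0$ on the $\omega$-limit set and $u\to 0$, I would invoke detectability of the plant: the plant state, governed asymptotically by $\dot x=f_p(x,0)$ with $y=0$, satisfies $x(t)\to 0$; here I would make this rigorous either by a reduction to the $\omega$-limit set (which is invariant and on which $u=0$, $y=0$, so by detectability it must be $\{x=0\}$) or by an ISS-type robustness argument treating $u(t)\to 0$ as a vanishing perturbation. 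Finally, $z_1(t)\to 0$, $v(t)=y(t)\to 0$, and the ISS property of the $z_2$-dynamics (the converging-input converging-state corollary of Definition~\ref{assumption: iss}) give $z_2(t)\to 0$. Altogether $\xi(t)=(x(t),z(t))\to 0$, establishing (c).

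\textbf{Main obstacle.} The principal technical difficulty is that \eqref{eq:cloop} is a discontinuous differential equation and classical invariance principles do not apply off the shelf; I expect the heart of the proof to be (i) justifying that the $\omega$-limit set is nonempty, compact, and \emph{forward invariant} for the ePDS — this requires the Krasovskii-regularization/Carath\'eodory equivalence recalled in Section~\ref{sec:existence} and a suitable upper-semicontinuity argument for $K_F$ — and (ii) verifying rigorously that the partial projection along $\cE$ genuinely leaves the $z_2$-dynamics (and the plant dynamics) untouched, so that $\dot V_p\leq qu^2+2suy+ry^2$ and the ISS estimate for $z_2$ both remain valid for solutions of the \emph{projected} system. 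A secondary subtlety is applying detectability, which is stated for the unforced plant $\dot x=f_p(x,0)$, to the closed loop where $u$ only tends to zero; passing to the $\omega$-limit set (on which $u$ and $y$ are identically zero) is the natural device to circumvent this, but one must check that limit solutions of \eqref{eq:cloop} restricted to the $\omega$-limit set are genuine solutions of $\dot x=f_p(x,0)$.
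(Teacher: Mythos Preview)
Your proposal is correct and follows essentially the same route as the paper: boundedness via $V_p$ and the ISS estimate (after noting the projection leaves the plant and $z_2$-dynamics untouched), completeness by ruling out finite escape, and then an invariance-principle argument for the Krasovskii regularization to conclude $(u,y)\to(0,0)$, followed by detectability and ISS to get $x\to 0$ and $z_2\to 0$. The paper makes your ``main obstacle (i)'' precise by invoking the generalized invariance principle of \cite{RyanInvariancePrinciple} for upper-semicontinuous differential inclusions (using $V(\xi)=V_p(x)$ as the nonincreasing function), and handles ``obstacle (ii)'' by an explicit computation of the optimizer $\eta^\star$ in \eqref{piSE} showing the correction lies only in the $z_1$-direction; your diagnosis of both points is accurate.
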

\begin{proof}
The global asymptotic stability of \eqref{eq:cloop} will be proved by showing that the input and output of the plant converge to zero, then the state of the PBC and the plant also converge to zero due to the ISS property and detectability, respectively.  

Consider $\xi(0)\in \cS$ and the corresponding solution $\xi:[0,T]\rightarrow \ree^{m+n}$ to \eqref{eq:cloop}. The local existence of solutions is proven in \cite{heemelsaneel_lcss_2023} and we proceed to show that all trajectories are bounded. Note that at this point, we cannot conclude directly that $T=\infty$, however, by ruling out finite-escape times due to boundedness of the solution, we will establish $T=\infty$ in this proof as well. Based on the discussion in Section~\ref{sec:existence}, it is clear that a solution to \eqref{eq:cloop} is also a solution to its Krasovskii regularization \eqref{eq:kras}, which will turn out useful in the application of the invariance principle for differential inclusions as in \cite{RyanInvariancePrinciple}, which is used to prove that all trajectories converge to the invariant set that contains only the origin, implying that the origin is asymptotically stable (Definition \ref{def: asymp stab}).

First, we show the boundedness of solutions on the time interval $t\in [0,T]$. Because of the designed sector condition and the negative feedback interconnection, the time derivative of the storage function $t \mapsto V_p(x(t))$ is non-positive (recall \eqref{eq: Vp dot}). Hence,  $V_p(x(t))$ is non-increasing over time, leading to
\begin{equation}
    \alpha_1(\|x(t)\|) \leq V_p(x(t)) \leq V_p(x(0)) \leq \alpha_2(\|x(0)\|),
\end{equation}
and thus 
\begin{equation} \label{eq: bound x}
    \|x(t)\|\leq \alpha_1^{-1}(\alpha_2(\|x(0))\|),\, \forall t\in [0,T].
\end{equation} 
From \eqref{eq: bound x} and \eqref{eq:plant output}, we get the existence of a $\mathcal{K}$-function $\beta$ such that
\begin{equation}
    \|y(t) \| \leq \beta(\|x(0) \|),\,\forall t\in [0,T].
\end{equation}
Using now the sector condition $(y,-u) \in S$ as in \eqref{eq:sector5}, we get for all $t\in [0,T]$
\begin{equation}
    \| z_1(t)\| = \|u(t) \|  \leq \max(|k_1|,|k_2|) \| y(t) \| \leq \kappa \beta(\|x(0) \|)
\end{equation}
with $\kappa:=\max(|k_1|,|k_2|)$. 
Because the projection operator does not alter the $z_2$-part of the controller dynamics (see \cite[Sec. \rom{3}.B]{heemelsaneel_lcss_2023} for more details), $z_2$ satisfies $\dot{z}_2 = f_2(z_1,z_2,y)$ and thus the ISS property \eqref{eq: ISS z_2} can be exploited. This gives 
\begin{equation}
    \| z_2(t) \|\leq  \bar{\beta}(\| z_2(0)\|,0) + \Bar{\gamma}_1 (\beta(\norm{x(0)})) + \Bar{\gamma}_2 (\kappa\beta(\norm{x(0)}))
\end{equation}
The states $z_2$ are therefore bounded, implying the boundedness of the solutions $\xi$ given an inital condition $\xi(0)=\xi_0\in\cS$ on $[0,T]$, and, in fact, the Lyapunov stability part (b) of Definition~\ref{def: asymp stab}. As $\xi$ is absolutely continuous, and thus uniformly continuous (on compact intervals), combined with the boundedness of $\xi$, we can  now proceed similarly as in the proof of \cite[Thm. 4.2]{DeeSha_AUT21a} to show by contradiction that $T$ can be taken as $T=\infty$. Moreover, every solution can be prolonged to be defined on $\ree_{\geq 0}$, and thus part (a) of Definition~\ref{def: asymp stab} is guaranteed. Furthermore, in the terminology of \cite{RyanInvariancePrinciple} this means satisfaction of the completeness  and the precompactness property \cite[Definition 2.3]{RyanInvariancePrinciple} of solutions, that is needed to apply the generalized invariance principle in \cite[Theorem 2.11]{RyanInvariancePrinciple} to find the largest invariant set.

As the generalized invariance principle in \cite[Theorem 2.11]{RyanInvariancePrinciple} applies to differential inclusions with certain regularity properties, we consider the trajectories $\xi:[0,\infty) \rightarrow \ree^{m+n}$ now as solutions to the Krasovskii regularization. The map $\xi \mapsto K_F(\xi)$ is outer semicontinuous \cite[Lemma 5.16]{goebel_sanfelice_teel_2012} and has non-empty, convex, and compact values \cite{heemelsaneel_lcss_2023}. Since $K_F$ is also locally bounded, it follows that $K_F$ is upper semicontinuous, see \cite[Lemma 5.15]{goebel_sanfelice_teel_2012}. From the above we get that $V(\xi):= V_p(x)$ is a locally Lipschitz continuous (in fact, continuously differentiable), positive semi-definite function, and  
$$\begin{aligned}
    V^o(\xi,\phi)&:= \limsup_{y\to \xi,h\downarrow 0} \frac{V(y+h\phi)-V(y)}{h} \\
    &= \frac{\partial V_p(x)}{\partial x} f_p(x,-z_1)
    \leq 0
\end{aligned}$$ Here, we used the fact that the projection does not affect the dynamics of the plant. 
According to \cite[Theorem 2.11]{RyanInvariancePrinciple}, for some constant $c\geq 0$, $\xi(t)$ approaches the largest weakly invariant set \cite[Definition 2.7]{RyanInvariancePrinciple} in $\Sigma \cap V^{-1}(c)$ when $t \to \infty$, where
\[\Sigma = \{ \xi \mid \tilde{u}(\xi) \geq 0 \}\]
and $\tilde{u}(\xi) =  \dot V_p =\frac{\partial V_p(x)}{\partial x} f_p(x,-z_1)$. As $ \tilde{u}(\xi)\leq 0$, $\Sigma = \{ \xi \mid \tilde{u}(\xi) = 0 \} = \{ \xi \mid \dot V_p = 0 \}$ = $\{ \xi \mid (u,y) = (0,0) \}$ (from \eqref{eq: Vp dot}). Hence, any solution inside (the largest weakly invariant set of) $\Sigma \cap V^{-1}(c)$ satisfies $u(t) =y(t) =0$ for all times $t$, and thus also $v(t)=z_1(t)=0$. Because both $z_1$ and $v$ are zero for all times, the ISS property of the $z_2$-dynamics  with respect to $(z_1,v)$, guarantees that $z_2(t) \to 0$ as $t\to \infty$. 
Furthermore, because of the detectability property (Assumption \ref{assumption: detectable}) 
of the plant $x(t) \to 0$ for $t\to \infty$ as $u(t) = 0$ and $y(t) = 0$ for all times $t$. Thus, all solutions converge to  the origin $\xi = 0$. Hence, also part (c) of Definition~\ref{def: asymp stab} is satisfied, and, thus, the closed-loop system \eqref{eq:cloop} is globally asymptotically stable. This completes the proof.

\end{proof}

Let $I(\xi)=\{ i \in \{1,2 \} \,|\, H_i (\xi) \xi = 0 \}$ denote the set of active constraints at $\xi \in \cS$. According to \cite{BardiaThesis}, the solution $\eta^\star$ to the optimization problem satisfies
\begin{equation*}
\begin{aligned}
    \eta^\star &= (E^\top E)^{-1} E^\top H^\top_{I(\xi)} \lambda_{I(\xi)}\\ &= \pm 
        I_{m}
     \begin{pmatrix}
        I_{m} &0_{m \times 1}
    \end{pmatrix} \begin{pmatrix}
        1 &0 &\cdots &0 &(\cdot)
    \end{pmatrix}^\top \lambda_{I(\xi)}\\ &= \pm\begin{pmatrix}
        1 &0 &\cdots &0 
    \end{pmatrix}^\top \lambda_{I(\xi)},
\end{aligned}
\end{equation*}
with $ \lambda_{I(\xi)} \in \ree$, thus the projected dynamics of the controller is only altered in the $z_1$-dimension. The $\pm$ sign depends on the state $\xi$, or more specifically on $v$, and does not affect this analysis. In other words, the dynamics of $z_2$ remain unaffected by the projection operator.

In this section we illustrate the PBC design framework on two numerical examples. 

\subsection{Mass-spring-damper system}
The control problem is to stabilize a mass-spring-damper system given the velocity only. The dynamics of the mass-spring-damper is
\begin{equation*}
\begin{aligned}
    \frac{d}{dt} \begin{pmatrix}
        x_1 \\ x_2
    \end{pmatrix} &= \begin{pmatrix}
        0 &1\\ -10 &-0.01
    \end{pmatrix} \begin{pmatrix}
        x_1 \\ x_2
    \end{pmatrix} + \begin{pmatrix}
        0 \\ 1
    \end{pmatrix} u,\\
    y &= \begin{pmatrix}
        0 &1
    \end{pmatrix} \begin{pmatrix}
        x_1 \\ x_2
    \end{pmatrix}.
\end{aligned}
\end{equation*}
With a storage function that is identical to the total energy, the above system mass-spring-damper system is passive from $u$ to $y$.
A simple choice is a gain controller $u = -ky$ ($k>0$), and with this choice, there is a limitation in the convergence rate \cite{TORA_passivity}. Note that the mass-spring-damper system is clearly detectable as $u(t)=y(t)=0, \, \forall t \geq 0 \implies x(t) = 0$. We compare this simple controller with $k=4.8$ (C0) and a projected controller (C1) that has the unprojected dynamics
\begin{equation} \label{eq: mck controller}
\begin{aligned}
    \frac{d}{dt} \begin{pmatrix}
        z_1\\z_2
    \end{pmatrix} &= \begin{pmatrix}
        1 &-10\\0 &-1
    \end{pmatrix}\begin{pmatrix}
        z_1\\z_2
    \end{pmatrix} + \begin{pmatrix}
        0\\1
    \end{pmatrix}y_2,\\
    u &= -z_1,
\end{aligned}
\end{equation}
and a sector condition $(k_1,k_2) = (3.5,6.0)$. Both controllers are stabilizing controllers (Fig. \ref{fig:mck compare}). 
The closed-loop system with a projected controller has faster convergence and lower overshoot, compared with the closed-loop system with C0 (which has a damping ratio of $\zeta = 0.77$ - a favorable value for the balance between fast convergence and low overshoot). It is notable that the linear controller with the dynamics \eqref{eq: mck controller} (and without any projection) is a destabilizing controller, yet it can be turned into a stabilizing controller only by applying the projection operator to enforce a sector condition (Fig. \ref{fig:mck compare}). 

\begin{figure}[tbp]
    \centering
    \includegraphics[width=0.5\textwidth]{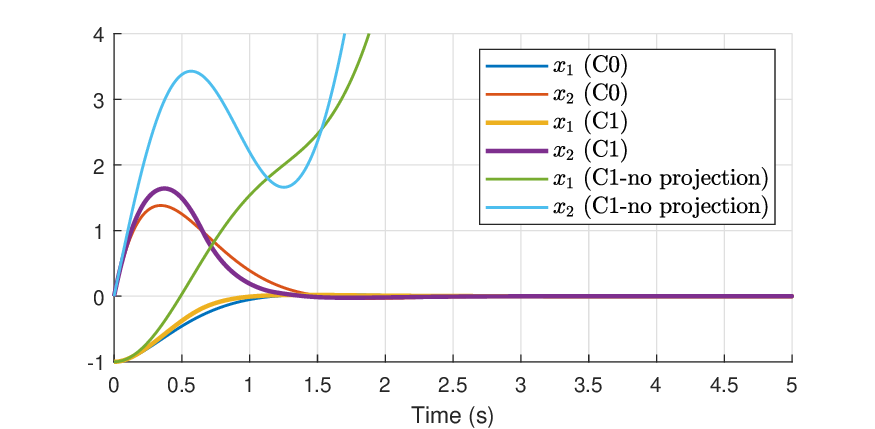}
    \caption{Comparison simple gain controller and projected controller on a mass-spring-damper system.}
    \label{fig:mck compare}
\end{figure}

\begin{figure}[tbp]
    \centering
    \includegraphics[width=0.5\textwidth]{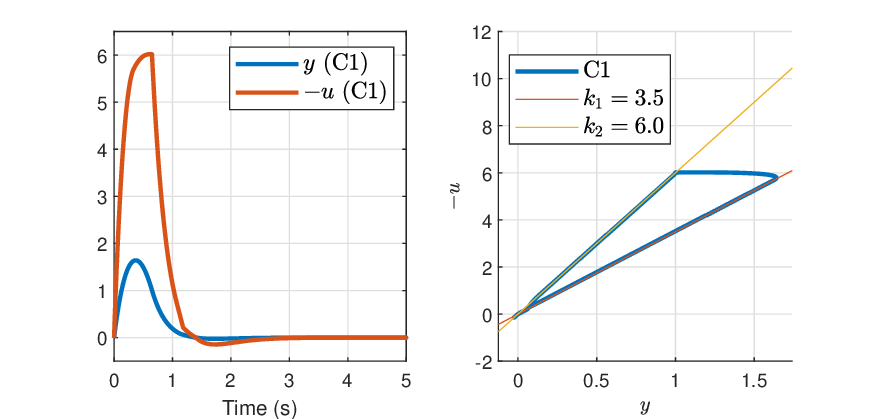}
    \caption{Projected controller input-output operating on a mass-spring-damper system.}
    \label{fig:mck inout}
\end{figure}

\subsection{TORA system}
The control problem to be considered is to design an auxiliary input to a passivating controller that stabilizes the TORA system - a well-known nonlinear, underactuated, and passive system. The TORA system, comprising a spring, a mass, and a pendulum attached to the mass, is studied in \cite{khalil2013nonlinear,LeTeel_SoftReset,TORA_passivity}, and several controllers that exploit the passivity of the system were proposed. The TORA system is described by the following differential equations
\begin{equation}
    \begin{pmatrix}
        1 &\epsilon \cos (\theta) \\ \epsilon \cos (\theta) &1
    \end{pmatrix} \begin{pmatrix}
        \Ddot{\theta} \\ \Ddot{x}
    \end{pmatrix} = \begin{pmatrix}
        u \\ -x + \epsilon\Dot{\theta}^2 \sin (\theta)
    \end{pmatrix},
\end{equation}
where $\epsilon = 0.1$, $\theta$ is the angle of the pendulum, $x$ is the normalized position of the mass, and $u$ is the input to the system.
Following \cite{TORA_passivity}, by considering the storage function
\begin{equation}
    W = \frac{h_0 + 1}{2} [(z_1-\epsilon \sin y_1)^2+z_2^2] + \frac{h_1}{2} y_1^2 + \frac{1}{2} y_2^2 (1-\epsilon^2 \cos^2 y_1),
\end{equation}
with $h_0,h_1>0$ and the transformation of variables
\begin{equation*}
    \begin{aligned}
        z_1 &= x + \epsilon \sin \theta,\\
        z_2 &= \dot x + \epsilon \dot \theta \cos \theta,\\
        y_1 &= \theta,\\
        y_2 &= \dot \theta,
    \end{aligned}
\end{equation*}
one finds a passivating input 
\begin{equation}\label{eq: passivating control}
    u = -h_0 \epsilon \cos y_1 (-z_1 + \epsilon \sin y_1) -h_1 y_1 + w
\end{equation}
that renders the system passive from $u$ to $y_2$. Therefore, a simple choice to stabilize the system is $w = -h_2 y_2$ ($h_2 >0$). The parameters found for high performance are $(h_0,h_1,h_2) = (10,1,0.5)$ \cite{TORA_passivity}, and this controller is referred to as C0. We compare this control law with two versions of our projected controller. One version (C1) has the original dynamics
\begin{equation}
\begin{aligned}
    \frac{d}{dt} \begin{pmatrix}
        z_1\\z_2
    \end{pmatrix} &= \begin{pmatrix}
        3 &-2\\0 &-3
    \end{pmatrix}\begin{pmatrix}
        z_1\\z_2
    \end{pmatrix} + \begin{pmatrix}
        0\\1
    \end{pmatrix}y_2,\\
    w &= -z_1,
\end{aligned}
\end{equation}
and a sector condition $(k_1,k_2) = (0.45,0.6)$. The $z_2$-dynamics is clearly ISS as $-3$ is a Hurwitz matrix of dimension one. The other version (C2) has the original nonlinear dynamics
\begin{equation}
\begin{aligned}
    \frac{d}{dt} \begin{pmatrix}
        z_1\\z_2
    \end{pmatrix} &= \begin{pmatrix}
        3 &-2\\0 &-1
    \end{pmatrix}\begin{pmatrix}
        z_1\\z_2
    \end{pmatrix} + \begin{pmatrix}
        0\\-2 z_2^3 + (1+z_2^2)y_2^2
    \end{pmatrix},\\
    w &= -z_1,
\end{aligned}
\end{equation}
and the same sector condition $(k_1,k_2) = (0.45,0.6)$. The $z_2$-part of the controller dynamics is ISS \cite[Example 4.26]{khalil2013nonlinear}. One proves that the TORA system with the passivating control \eqref{eq: passivating control} is detectable using LaSalle's invariance principle similarly to \cite{TORA_passivity}. 
The stability analysis in Sec. \ref{sec: stability analysis} therefore applies, and the origin of the closed-loop system is asymptotically stable (Fig. \ref{fig:tora compare}, Fig. \ref{fig:tora controller}). 
The projected controller provides a similar convergence time with less oscillation, in both $x$ and $\theta$ (Fig. \ref{fig:tora compare}). It is to be emphasized that both controllers C1 and C2 would be destabilizing if they were without the projection operator. The design of these PBC controllers provides a simple way to turn destabilizing controllers into stabilizing ones while giving a high level of freedom to the design of the underlying un-projected dynamics.

\begin{figure}[tbp]
    \centering
    \includegraphics[width=0.5\textwidth]{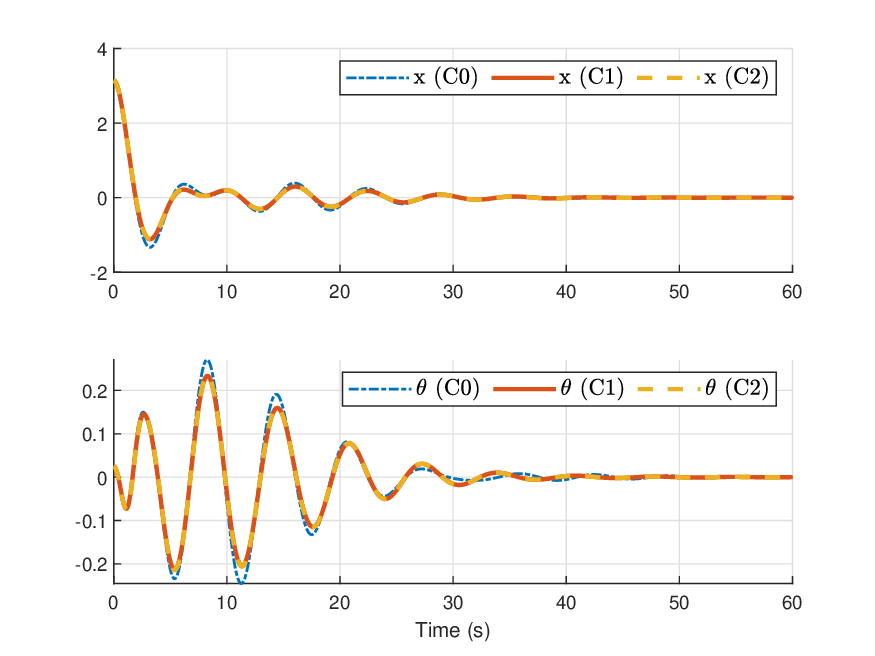}
    \caption{Time evolution of the position of the mass and the angle of the pendulum of the TORA system with three controllers (C0: simple gain, C1 and C2: PBC)}
    \label{fig:tora compare}
\end{figure}

\begin{figure}[tbp]
    \centering
    \includegraphics[width=0.5\textwidth]{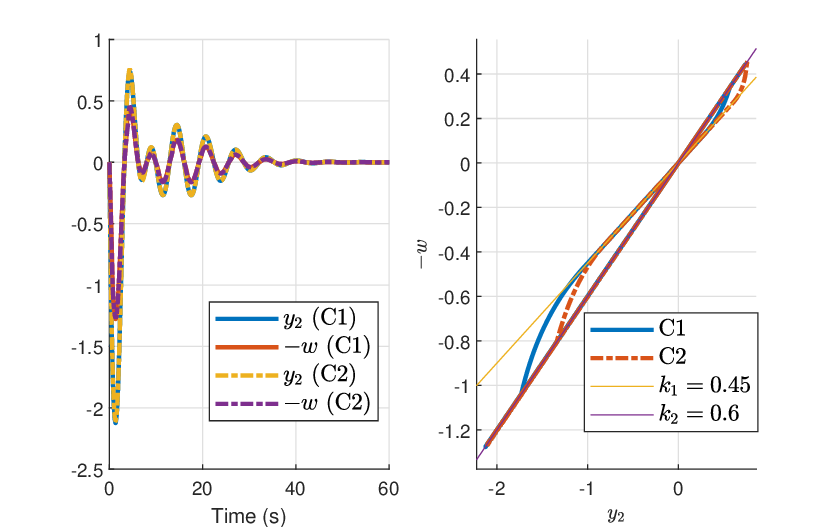}
    \caption{The input-output pair of two PBC controllers operating on the TORA system (plotted against time and against each other)}
    \label{fig:tora inout}
\end{figure}

\begin{figure}[tbp]
    \centering
    \includegraphics[width=0.5\textwidth]{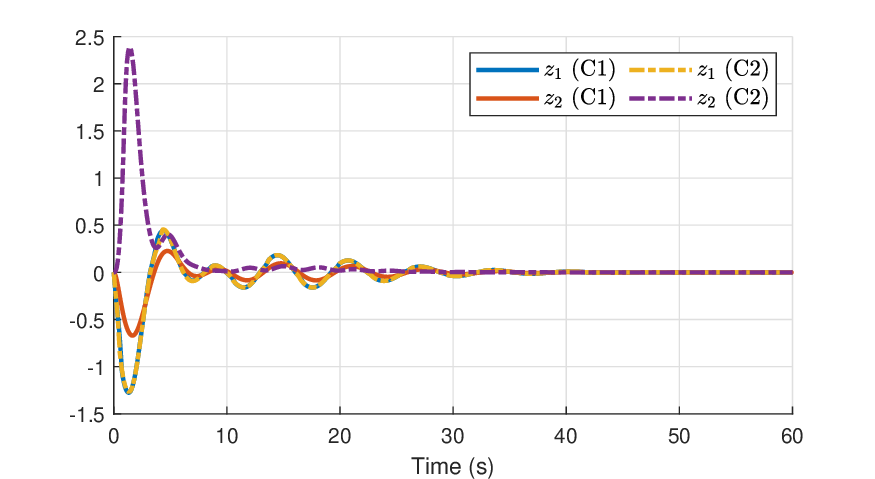}
    \caption{Time evolution of the states of two PBC controllers}
    \label{fig:tora controller}
\end{figure}

\section{Conclusion}\label{sec: conclusion}
In this paper, we present a systematic design procedure for  projection-based controllers that asymptotically stabilize a detectable plant satisfying a general $(q,s,r)$-dissipativity property (including, e.g., passivity, strict passivity, or ${\cal L}_2$-gains). The design is intuitive and consists mainly of shaping the sector-bounds imposed on the input-output relationship of the PBC by means of partial projection operators. In fact, any given controller dynamics, in which an input-to-state stability property holds on the part of the controller dynamics that is unchanged by the projection, can be turned into an asymptotically stabilizing controller by the insertion of a well-crafted projection in the control loop. The proof of the main result relies on a generalized invariance principle for differential inclusions. Two illustrative examples are provided to demonstrate the application of PBC design to stabilization of well-known passive systems. 

\bibliographystyle{IEEEtran}
\bibliography{ref}
\end{document}